\definecolor{graybright}{cmyk}{0, 0, 0, 0.15}
\definecolor{mydarkgreen}{cmyk}{0.85, 0.31, 0.96, 0.2}
\definecolor{myblue}{cmyk}{0.91, 0.67, 0.53, 0.51}
\definecolor{myred}{cmyk}{0.08, 0.86, 0.75, 0.01}
\definecolor{myorange}{cmyk}{0.08, 0.49, 1, 0}
\definecolor{stringgreen}{rgb}{0.25,0.5,0.35} 
\definecolor{mygray}{rgb}{0.5, 0.5, 0.5} 
\tikzset{->,  
>=stealth, 
node distance=0.1cm, 
}
\tikzset{every state/.append style={thick, font=\scriptsize, fill=gray!10, rectangle, rounded corners, inner sep=3pt, minimum size=0.4cm}}
\tikzset{hidden/.append style={thick, font=\scriptsize, text=gray!50, draw=gray!20, rectangle, rounded corners, inner sep=3pt, minimum size=0.4cm}}
\tikzset{label_node/.style={font=\scriptsize, align=center, auto}}
\newcommand*{\Scale}[2][4]{\scalebox{#1}{$#2$}}%
\providecommand{\incomp}{
\Scale[0.7]{
  \not\mathrel{
  \smash{
  \vcenter{
    \offinterlineskip 
    \ialign{
       \hfil##\hfil\cr 
       $\sqsubseteq$\cr 
       \noalign{\kern.3ex}
       $\sqsupset$\cr 
    }
  }
  }
  \vphantom{>}
  }
 }
}
\newcommand{\CodeCurly}[1]{\textcolor{mydarkgreen}{#1}}
\newcommand{\CodeRound}[1]{\textcolor{myred}{#1}}
\newcommand*{\SavedLstInline}{}
\LetLtxMacro\SavedLstInline\lstinline
\DeclareRobustCommand*{\lstinline}{%
  \ifmmode
    \let\SavedBGroup\bgroup
    \def\bgroup{%
      \let\bgroup\SavedBGroup
      \hbox\bgroup
    }%
  \fi
  \SavedLstInline
}
\lstdefinestyle{BCSL}{
  literate={(}{{\CodeRound{(}}}1
           {=>}{{$\Rightarrow$}}1
           {+}{{\hspace{0.05cm}$+$\hspace{0.05cm}}}1
           {::}{{\hspace{0.01cm}$::$\hspace{0.01cm}}}1
           {*}{{$\times$}}1
           {~}{{$\sim$}}1
           {)}{{\CodeRound{)}}}1
           {\{}{{\CodeCurly{\{}}}1
           {\}}{{\CodeCurly{\}}}}1, 
  commentstyle=\color{mygray}\ttfamily
}
\lstdefinestyle{EBNF}{
  stringstyle=\color{stringgreen},
  showstringspaces=false
}
\begin{document}

\title{Biochemical Space Language \\in Relation to Multiset Rewriting Systems} 

\titlerunning{The Relation of Biochemical Space Language to Multiset Rewriting Systems} 

\author{Matej Troj\'{a}k, David \v{S}afr\'{a}nek, and Lubo\v{s} Brim}

\institute{Faculty of Informatics, Masaryk University, Brno, Czech Republic}

\authorrunning{Troj\'ak et al.} 

\maketitle

\begin{abstract}
This technical report relates Biochemical Space Language (BCSL)~\cite{TROJAK202091} to Multiset rewriting systems (MRS)~\cite{trojak2021regulated}. For a BCSL model, the semantics are defined in terms of transition systems, while for an MRS, they are defined in terms of a set of runs. In this report, we relate BCSL to MRS by first showing how the transition system is related to a set of runs and consequently showing how for every BCSL model, an MRS can be constructed such that both represent the same set of runs. The motivation of this step is to establish BCSL in the context of a more general rewriting system and benefit from properties shown for them. Finally, we show that \emph{regulations} defined for MRS can be consequently used in the BCSL model.
\end{abstract}

\section{Multiset rewriting systems}

This section recalls some definitions and known results about multisets and rewriting systems over them. Intuitively, a multiset is a set of elements with allowed repetitions. A~multiset rewriting \emph{rule} describes how a particular multiset is transformed into another one. A multiset rewriting system consists of a set of rewriting rules, defining how the system can evolve, and an initial multiset, representing the starting point for the rewriting.

\begin{definition}{Multiset}

\noindent Let $\mathcal{S}$ be a finite set of \emph{elements}. A \emph{multiset} over $\mathcal{S}$ is a total function $ \mathtt{M} : \mathcal{S} \rightarrow \mathbb{N}$ (where $\mathbb{N}$ is the set of natural numbers including 0). For each  $\mathit{a} \in \mathcal{S}$ the \emph{multiplicity} (the number of occurrences) of $\mathit{a}$ is the number $\mathtt{M}(\mathit{a})$. 
\end{definition}

Operations and relations over multisets are defined in a standard way, taking into account the repetition of elements.

\begin{definition}{Operations and relations over multisets}

\begin{center}
\begin{tabular}{l l}
-- Union & $\mathtt{M}_1 \cup \mathtt{M}_2:~\forall a \in \mathcal{S}.~ (\mathtt{M}_1 \cup \mathtt{M}_2)(a) = \mathtt{M}_1(a) + \mathtt{M}_2(a)$\\
-- Difference & $\mathtt{M}_1 \smallsetminus \mathtt{M}_2$:\\
  & $\forall a \in \mathcal{S}.~ (\mathtt{M}_1 \smallsetminus \mathtt{M}_2)(a) = $
    $\begin{cases}
    \mathtt{M}_1(a) - \mathtt{M}_2(a) & \mathtt{if}~ \mathtt{M}_2(a) \leq \mathtt{M}_1(a) \\
    0 & \mathtt{otherwise}
  \end{cases} $\\
-- Intersection & $\mathtt{M}_1 \cap \mathtt{M}_2: \forall a \in \mathcal{S}.~ (\mathtt{M}_1 \cap \mathtt{M}_2)(a) = \mathtt{min}\{\mathtt{M}_1(a), \mathtt{M}_2(a)\} $ \\
-- Submultiset & $ \mathtt{M}_1 \subseteq \mathtt{M}_2: \forall a \in \mathcal{S}.~ \mathtt{M}_1(a) \leq \mathtt{M}_2(a) $\\
-- Equality & $\mathtt{M}_1 = \mathtt{M}_2: \forall a \in \mathcal{S}.~ \mathtt{M}_1(a) = \mathtt{M}_2(a) $\\
-- Occurrence & $a \in \mathtt{M}: \exists a \in \mathcal{S}.~ \mathtt{M}(a) \geq 1 $\\
\end{tabular}
\end{center}
\end{definition}

\begin{definition}{Multiset rewriting rule}

\noindent A \emph{multiset rewriting rule} over $\mathcal{S}$ is a pair $\mu = (^{\bullet}\mu, \mu^{\bullet})$ of multisets over $\mathcal{S}$, usually written as $\mu: \hspace{0.01cm} ^{\bullet}\mu \to \mu^{\bullet}$ for convenience.
\end{definition}

The rule rewrites elements specified in the left-hand multiset $^{\bullet}\mu$ to elements specified in the right-hand multiset $\mu^{\bullet}$.

\begin{definition}{Multiset rewriting system}

\noindent A \emph{multiset rewriting system (MRS)} over $\mathcal{S}$ is a pair $\mathcal{M} = (\mathcal{X}, \mathtt{M_0})$, where $\mathcal{X}$ is a finite set of multiset rewrite rules and $\mathtt{M_0}$ is the initial multiset \emph{(state)}, both over $\mathcal{S}$.
\end{definition}

We denote by $\mathbb{MRS}$ the class of multiset rewriting systems. 

\begin{definition}{Enabled rule}

\noindent Let $\mathtt{M}$ be a multiset and $\mathcal{M} = (\mathcal{X}, \mathtt{M_0})$ an MRS, both over $\mathcal{S}$.
A rule $\mu \in \mathcal{X}$ is \emph{enabled} at $\mathtt{M}$ if ~$^{\bullet}\mu \subseteq \mathtt{M}$. 
\end{definition}

\begin{definition}{Rule application}

\noindent The \emph{application} of an enabled rule $\mu \in \mathcal{X}$ to $\mathtt{M}$, written $\mathtt{M} \rightarrow_{\mu} \mathtt{M}'$, creates a multiset $\mathtt{M}' = (\mathtt{M} \smallsetminus \hspace{0.01cm} ^{\bullet}\mu) \cup \mu^{\bullet}$.
\end{definition}

\begin{definition}{Run}

\noindent A \emph{run} $\pi$ of $\mathcal{M}$ is an infinite sequence of multisets $\pi = \mathtt{M_0} \mathtt{M_1} \mathtt{M_2} \ldots$ such that for any \emph{step} $\mathtt{i > 0}$ holds that $\mathtt{M_{i-1}} \rightarrow_{\mu} \mathtt{M_i}$ for some $\mu \in \mathcal{X}$. We denote by $\pi[\mathtt{i}]$ the multiset created in step $\mathtt{i}$.
\end{definition}

\begin{definition}{Run label}

\noindent A \emph{run label} $\overrightarrow{\pi}$ of a run $\pi$ is an infinite sequence of rules $\overrightarrow{\pi} = \mu_\mathtt{1} \mu_\mathtt{2} \mu_\mathtt{3} \ldots$ such that for any \emph{step} $\mathtt{i > 0}$ holds that $\pi[\mathtt{i-1}] \rightarrow_{\mu_\mathtt{i}} \pi[\mathtt{i}]$. We denote by $\overrightarrow{\pi}[\mathtt{i}]$ the rule applied in step \texttt{i}.
\end{definition}

\begin{definition}{Semantics}

\noindent The \emph{semantics} of system $\mathcal{M}$ is an (infinite) set $\mathfrak{L}(\mathcal{M})$ of all possible runs such that $\forall \pi \in \mathfrak{L}(\mathcal{M}).~ \pi[\mathtt{0}] = \mathtt{M_0}$ (i.e. runs start in the initial multiset).
\end{definition}

To ensure the infiniteness of runs, we implicitly assume the presence of a special \emph{empty} rule $\varepsilon = (\emptyset, \emptyset)$. We require that this rule can be applied \emph{only} when no other rule of the system is enabled. It also ensures that the set of rules $\mathcal{X}$ is always non-empty.

\section{BioChemical Space Language}\label{BCSL-definition}

In this section we provide declarative definition of BioChemical Space Language. A constructive (or \emph{imperative}) version of the definition is available in~\cite{TROJAK202091}.

Let $\mathcal{V}_\delta, \mathcal{V}_a, \mathcal{V}_s, \mathcal{V}_c$ be mutually exclusive finite sets of names of features, names of atomic and structure components, and compartments, respectively. 

\begin{table}
\begin{center}
\begin{tabular}{c}
\begin{lstlisting}[style=EBNF, mathescape]
multiset:    $\emptyset$ | agent | multiset "+" multiset
agent:       chain "::" COMPARTMENT
chain:       component | chain "." component
component:   atomic | structure
structure:   NAME "(" composition ")"
composition: $\emptyset$ | atomic | composition "," atomic
atomic:      NAME "{" FEATURE "}"
\end{lstlisting}
\end{tabular}
\end{center}

\caption{A context-free grammar of core BCSL terms in EBNF~\cite{scowen1993generic} notation.}\label{bcsl_syntax}
\end{table}

In~\autoref{bcsl_syntax} we provide a fragment of complete syntax\protect\footnote{\url{https://github.com/sybila/eBCSgen/wiki/Model-syntax\#complete-syntax}} of BCSL, capturing  agents and multisets, where the terminal (in capitals) $\mathtt{FEATURE} \in \mathcal{V}_\delta$ is from given set of feature names, $\mathtt{NAME} \in \mathcal{V}_a$ (resp. $\mathtt{NAME} \in \mathcal{V}_s$) is from given set of atomic (resp. structure) component names, and $\mathtt{COMPARTMENT} \in \mathcal{V}_c$ is is from given set of compartments. We restrict ourselves only to finite expressions and require that an atomic name occurs at most once in a composition. On top of this syntax, several syntactic extensions~\cite{TROJAK202091} are build providing more convenient and succinct notation.

For simplicity, we denote by $\mathsf{M}$ a multiset and by $\mathbb{M}$ the set of all multisets. We assume the \emph{structural congruence}~$\equiv$ to be the least congruence on terms from~\autoref{terms_congru} satisfying respective axioms. That is, two multisets (or any terms) are \emph{equal} if they are structurally congruent.

\begin{table}
\begin{center}
{\small
\begin{tabular}{|@{\hspace{2em}} l @{\hspace{2em}}|@{\hspace{2em}} l @{\hspace{2em}}|}
\hline
\textbf{Term} & \textbf{Satisfying axioms} \\
\hline
\texttt{multisets} & $\mathsf{M}_1 + \mathsf{M}_2 \equiv \mathsf{M}_2 + \mathsf{M}_1$ \\
 & $\mathsf{M} + \emptyset \equiv \mathsf{M}$\\
 \hline
\texttt{chains} & $\mathtt{chain} . \mathtt{component} \equiv \mathtt{component} . \mathtt{chain} $ \\
\hline
 \texttt{compositions} & $\mathtt{composition} , \mathtt{atomic} \equiv \mathtt{atomic}, \mathtt{composition}$ \\
 & $\emptyset, \mathtt{composition} \equiv \mathtt{composition}$\\
\hline
\end{tabular}
}
\end{center}
\caption{Table defining axioms of structural congruence $\equiv$ for particular terms from BCSL grammar.}\label{terms_congru}
\end{table}

The structural congruence $\equiv$ allows us to formally define the algebraic multiset operations $\in, \subseteq, \subset, \cup, \cap$ and $\setminus$ on BCSL multisets. For example, $\mathtt{agent} \in \mathsf{M}$ corresponds to $\exists \mathsf{M}' \in \mathbb{M} . \mathsf{M} \equiv \mathtt{agent} + \mathsf{M}'$ and $\mathsf{M} \subseteq \mathsf{M}'$ corresponds to $\exists \mathsf{M}'' \in \mathbb{M} . \mathsf{M}' \equiv \mathsf{M} + \mathsf{M}''$. Moreover, by $\mathsf{M}(\mathtt{agent})$ we denote the number of occurrences of agent $\mathtt{agent}$ in the multiset $\mathsf{M}$.

\begin{definition}{Signature}\label{signature}

\noindent\emph{Atomic signature} $\sigma_a: \mathcal{V}_a \rightarrow 2^{\mathcal{V}_\delta}$ is a function from an atomic name to a non-empty set of feature names. Set of possible atomic signatures is denoted as $\Sigma_a$. 
\emph{Structure signature} $\sigma_s: \mathcal{V}_s \rightarrow 2^{\mathcal{V}_a}$ is a function from a structure name to a set of atomic names. Set of possible structure signatures is denoted as $\Sigma_s$. 
\end{definition}

\begin{definition}{Pattern}\label{pattern}

\noindent Let $\mathbb{V}_\delta = \mathcal{V}_\delta \cup \{\varepsilon\}$ be a set of feature names extended by a special symbol $\varepsilon$. \emph{Pattern} $\mathsf{P}$ is defined according to the same grammar as $\mathtt{multiset}$ but with $\mathtt{FEATURE} \in \mathbb{V}_\delta$. We denote by $\mathbb{P}$ the set of all patterns. 
\end{definition}

The two patterns are equal if they are structurally equal (the congruence relation defined on multisets does \emph{not} apply). Finally, a pattern is \emph{well-formed} if the atomics are alphanumerically sorted in compositions with respect to their names. From now on, we assume only well-formed patterns.

\begin{remark}\label{pattern_vs_multiset}
In the following text, there is often a situation when a pattern $\mathsf{P}$  is compared to a multiset $\mathsf{M}$. In such a case, we treat the pattern as a multiset too (i.e. they are equal if they are structurally congruent according to~\autoref{terms_congru}). Moreover, it holds that $\varepsilon \neq \delta$ for any $\delta \in \mathcal{V}_\delta$.
\end{remark}

\begin{definition}{Instantiation}\label{instantiation}

\noindent An \emph{instantiation function} $\mathcal{I}: \mathbb{P} \rightarrow \mathbb{P}$ assigns to every $\mathtt{atomic}$ $\mathsf{A}$ in $\mathbb{P}$ with feature $\varepsilon$ a feature $\delta \in \sigma_a(\mathsf{A})$. By $\Gamma(\mathsf{P})$ we denote a finite set of all possible \emph{instantiations} $\mathcal{I}(\mathsf{P})$ of pattern $\mathsf{P}$.
\end{definition}

We define \emph{deatomisation} of pattern $\mathsf{P}$, written $d(\mathsf{P})$, as a sequence of \texttt{atomic}s preserving the order of their occurence in the pattern. Note that this applies to atomics in both chains and compositions.

\begin{definition}{Consistent instantiations}\label{consistent_instantiations}

\noindent Let us have two finite patterns $\mathsf{P}, \mathsf{P}'$ with deatomisations $d(\mathsf{P}) = \mathsf{A}_1, \mathsf{A}_2, \ldots, \mathsf{A}_n$ and $d(\mathsf{P}') = \mathsf{A}'_1, \mathsf{A}'_2, \ldots, \mathsf{A}'_m$. Next, let us have two instantiations $\mathcal{I}(\mathsf{P}) \in \Gamma(\mathsf{P})$ and $\mathcal{I}(\mathsf{P}') \in \Gamma(\mathsf{P}')$ with their deatomisations $d(\mathcal{I}(\mathsf{P})) = \mathcal{I}(\mathsf{A}_1), \mathcal{I}(\mathsf{A}_2), \ldots, \mathcal{I}(\mathsf{A}_n)$ and $d(\mathcal{I}(\mathsf{P}')) = \mathcal{I}(\mathsf{A}'_1), \mathcal{I}(\mathsf{A}'_2), \ldots, \mathcal{I}(\mathsf{A}'_m)$. We say the instantiations $\mathcal{I}(\mathsf{P}), \mathcal{I}(\mathsf{P}')$ are \emph{consistent}, written
$\mathcal{I}(\mathsf{P})\hspace*{0.2mm}\Delta\hspace*{0.6mm}\mathcal{I}(\mathsf{P}')$, if $\forall i < \mathsf{min}(m,n)$ holds that $\mathsf{A}_i = \mathsf{A}'_i \Rightarrow \mathcal{I}(\mathsf{A}_i) = \mathcal{I}(\mathsf{A}'_i)$.
\end{definition}

Consistency of two instantiations ensures that the same features are assigned in the same positions.

\begin{definition}{Pattern expansion}\label{pattern_expansion}

\noindent\emph{Pattern expansion} is a function $\langle \_ \rangle :
\mathbb{P} \rightarrow \mathbb{P}$ which
extends a given pattern $\mathsf{P}$ to a pattern $\langle \mathsf{P}
\rangle$ such that every occurrence of a composition of a $\mathtt{structure}$ is extended by $\mathtt{atomic}s$ whose names are not yet present in the composition and are defined in the given signature $\sigma_s(\mathtt{structure})$. 
These newly added $\mathtt{atomic}s$ have assigned feature $\varepsilon$ and are inserted to the composition in such way that they preserve the alphanumerical order.
\end{definition}

\begin{definition}{BCSL rule}\label{bcsl_rule}

\noindent A BCSL \emph{rule} $\mathsf{R}$ is a pair $(\mathsf{P}_l,
\mathsf{P}_r) \in \mathbb{P} \times \mathbb{P}$, usually written as $\mathsf{P}_l \rightarrow \mathsf{P}_r$. 
\end{definition}

The rule describes a structural change of a multiset defined by the difference between left-hand and right-hand patterns.

\begin{definition}{BCSL model}\label{bcsl_model}

\noindent A BCSL \emph{model} $\mathcal{B}$ is a tuple $(\mathcal{R}, \sigma_s, \sigma_a, \mathsf{M_0})$ such that $\mathcal{R}$ is a finite set of rewrite rules, $\sigma_s \in \Sigma_s$ is a structure signature, $\sigma_a \in \Sigma_a$ is an atomic signature, and $\mathsf{M_0} \in \mathbb{M}$ is an initial multiset.
\end{definition}

\begin{definition}{BCSL rewriting}\label{bcsl_rewriting}

\noindent Let $\mathcal{B} = (\mathcal{R}, \sigma_s, \sigma_a, \mathsf{M_0})$ be a BCSL model. The \emph{rewriting} of the multisets is given by labelled transition relation $\mathsf{M}_1 \xrightarrow[]{\mathsf{R}} \mathsf{M}_2$ with $\mathsf{M}_1, \mathsf{M}_2 \in \mathbb{M}$ and $\mathsf{R}: \mathsf{P}_l \rightarrow \mathsf{P}_r$ satisfying the following inference rule:

\begin{center}
\begin{tabular}{c}
$\exists~ \mathcal{I}\langle\mathsf{P}_l\rangle \in \Gamma\langle\mathsf{P}_l\rangle.~ \mathcal{I}\langle\mathsf{P}_l\rangle = \mathsf{M}_l$\\
$\exists~ \mathcal{I}\langle\mathsf{P}_r\rangle \in \Gamma\langle\mathsf{P}_r\rangle.~ \mathcal{I}\langle\mathsf{P}_r\rangle = \mathsf{M}_r$\\
$\mathcal{I}\langle\mathsf{P}_l\rangle\hspace*{0.2mm}\Delta\hspace*{0.6mm}\mathcal{I}\langle\mathsf{P}_r\rangle$\\
\hline
$\mathsf{M} + \mathsf{M}_l \xrightarrow[]{\mathsf{R}} \mathsf{M} + \mathsf{M}_r$\\
\end{tabular}
\end{center}
\end{definition}

The rewriting of the multisets gives semantics to the model. Intuitively, for pattern $\mathsf{P}_l$, the corresponding agents from the state are found and consequently replaced according to pattern $\mathsf{P}_r$. Applying such an operation transitively, starting in the initial multiset, yields a labelled transition system (\autoref{unregulated_lts}). An example of an LTS is available in~\autoref{running_lts}, corresponding to an example model from~\autoref{running_model}.

\begin{definition}{Labelled transition system $\mathtt{LTS}$}
\label{unregulated_lts}

\noindent Labelled transition system $\mathtt{LTS}(\mathcal{B}) = (S, T, L)$ of a BCSL model $\mathcal{B}$ is obtained by transitive rewriting of the initial state, where $S$ is a set of states (a state is a multiset of agents), $T$ is a set of transitions (a transition corresponds to the application of a rule), and $L$ is labelling function assigning to each transition an identifier of applied rule. 
\end{definition}

\begin{figure}[!h]

\begin{center}
\begin{tabular}{c}
\begin{lstlisting}[style=BCSL]
#! rules
r1_S ~ P(S{i})::cell => P(S{a})::cell
r1_T ~ P(T{i})::cell => P(T{a})::cell
r2   ~ P()::cell     => P()::out

#! inits
1 P(S{i},T{i})::cell
\end{lstlisting}
\end{tabular}
\end{center}

\caption{\textbf{An example of BCSL model.} A single agent \lstinline[style=BCSL]!P! can be modified on its two active sites, \lstinline[style=BCSL]!S! and \lstinline[style=BCSL]!T!. Both sites can be independently activated in respective rules. Additionally, the agent can be transported to another compartment outside the \lstinline[style=BCSL]!cell!. All rules are labelled using \lstinline[style=BCSL]!label ~! prefix. Initially, there is a single \lstinline[style=BCSL]!P! agent present with both sites inactivated. Please note the signature functions do not need to be explicitly defined as they can be gathered automatically from the model (in this case, $\Sigma_a (\mathtt{M}) = \Sigma_a (\mathtt{T}) = \{ \mathtt{i, a} \}$ and $\Sigma_s (\mathtt{P}) = \{ \mathtt{S, T} \}$).}\label{running_model}

\end{figure}

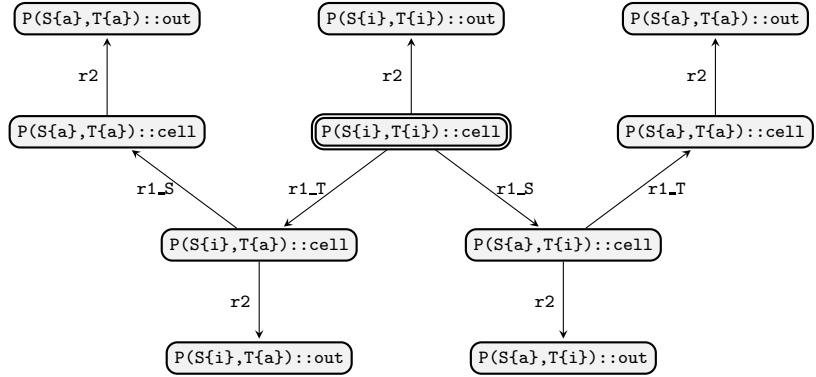
\begin{figure}

\begin{center}
\begin{tikzpicture}[shorten >=1pt, font=\scriptsize]
  \node[state,accepting] at (0,0) (q_0) {\lstinline[style=EBNF]!P(S{i},T{i})::cell!};
  \node[state] at (0,1.5) (q_4) {\lstinline[style=EBNF]!P(S{i},T{i})::out!};
  \node[state] at (-2,-1.5) (q_1) {\lstinline[style=EBNF]!P(S{i},T{a})::cell!};
  \node[state] at (2,-1.5) (q_2) {\lstinline[style=EBNF]!P(S{a},T{i})::cell!};
  \node[state] at (-4,0) (q_3_1) {\lstinline[style=EBNF]!P(S{a},T{a})::cell!};
  \node[state] at (4,0) (q_3_2) {\lstinline[style=EBNF]!P(S{a},T{a})::cell!};
  \node[state] at (-2,-3) (q_5) {\lstinline[style=EBNF]!P(S{i},T{a})::out!};
  \node[state] at (2,-3) (q_6) {\lstinline[style=EBNF]!P(S{a},T{i})::out!};
  \node[state] at (-4,1.5) (q_7_1) {\lstinline[style=EBNF]!P(S{a},T{a})::out!};
  \node[state] at (4,1.5) (q_7_2) {\lstinline[style=EBNF]!P(S{a},T{a})::out!};

  \path[->] (q_0) edge node [left] {$\mathtt{r2}$} (q_4)
            (q_0) edge node [left] {$\mathtt{r1\_T}$} (q_1)
            (q_0) edge node [right] {$\mathtt{r1\_S}$} (q_2)
            (q_1) edge node [left] {$\mathtt{r2}$} (q_5)
            (q_1) edge node [left] {$\mathtt{r1\_S}$} (q_3_1)
            (q_2) edge node [left] {$\mathtt{r2}$} (q_6)
            (q_2) edge node [right] {$\mathtt{r1\_T}$} (q_3_2)
            (q_3_1) edge node [left] {$\mathtt{r2}$} (q_7_1)
            (q_3_2) edge node [left] {$\mathtt{r2}$} (q_7_2);
\end{tikzpicture}
\end{center}

\caption{\textbf{Transition system of the model from~\autoref{running_model} in a tree-like representation.} The double circled state is the initial state, with contents defined in \texttt{inits} part.}\label{running_lts}
\end{figure}

\section{Systems comparison}

This section shows how an MRS can be constructed for any BCSL model and that such an MRS exhibits equivalent behaviour to the original BCSL model.

\subsection{MRS construction}
\label{mrs_construction}

In this section, we show how an MRS $\mathcal{M} = (\mathcal{X}, \mathtt{M_0})$ can be constructed from a BCSL model $\mathcal{B} = (\mathcal{R}, \Sigma_a, \Sigma_s, \mathtt{S_0})$. This approach is based on grounding agents and rules (supplement the missing context -- \autoref{grounding_function}). In particular, we need to do two steps -- construct the support set of elements (\autoref{set_of_elements}) by grounding all possible agents, and then construct the set of multiset rewriting rules by grounding each BCSL rule, creating its possible instantiations in terms of multisets (\autoref{set_of_rules}).

The abstraction provided by BCSL rules allowing to express patterns needs to be grounded in concrete multisets. Informally, this is accomplished by supplementing the context information from the signature functions to the patterns, obtaining particular realisations of patterns. In~\autoref{grounding_function}, there is formal definition of \emph{grounding function} $\Theta$, which uses an instantiation of patterns (\autoref{instantiation}).

\begin{definition}{Grounding function $\Theta$}
\label{grounding_function}

\noindent We define grounding function for a $\mathtt{pattern}$ $\mathsf{P}$ as a set of all its possible instantiated multisets $\Theta(\mathsf{P}) = \{ \mathcal{I}\langle\mathsf{P}\rangle ~|~ \mathcal{I}\langle\mathsf{P}\rangle \in \Gamma\langle\mathsf{P}\rangle \}$. Applied to a $\mathtt{rule}$ $\mathsf{r}$, we obtain a set of all possible \emph{reactions} using consistent instantiations $\Theta(\mathsf{r}) \equiv \Theta(\mathtt{lhs \Rightarrow rhs}) = \{~ \mathcal{I}\langle\mathtt{lhs}\rangle \Rightarrow \mathcal{I}\langle\mathtt{rhs}\rangle ~|~ \mathcal{I}\langle\mathtt{lhs}\rangle \in \Gamma\langle\mathtt{lhs}\rangle \wedge \mathcal{I}\langle\mathtt{rhs}\rangle \in \Gamma\langle\mathtt{rhs}\rangle \wedge \mathcal{I}\langle\mathtt{lhs}\rangle\hspace*{0.2mm}\Delta\hspace*{0.6mm}\mathcal{I}\langle\mathtt{rhs}\rangle \}$, where $\mathcal{I}\langle\mathtt{lhs}\rangle$ and $\mathcal{I}\langle\mathtt{rhs}\rangle$ are treated as multisets (\autoref{pattern_vs_multiset}).
\end{definition}

In~\autoref{set_of_elements}, we show how to create a set of all possible unique $\mathtt{agent}$s present in the model, which can be considered as the set of elements. It is constructed from initial state $\mathtt{M_0}$ and a set of rules $\mathcal{R}$ with the information provided in signature functions. We assume that the initial state $\mathtt{M_0}$ contains agents which are already grounded. 

\begin{definition}{Set of elements}
\label{set_of_elements}

\noindent Let $\mathcal{S}_\mathtt{0}$ be a set of unique elements from initial state $\mathtt{M_0}$ and $\mathcal{S}_\mathcal{R}$ be a set of all possible grounded $\mathtt{agent}$s present in the rules $\mathcal{R}$ defined as $\mathcal{S}_\mathcal{R} = \{ \mathsf{A} \in \Theta(\mathsf{A'}) ~|~ \mathsf{A'} \in \mathcal{A}(\mathtt{r}) \wedge \mathtt{r} \in \mathcal{R} \}$, where $\mathcal{A}(\mathtt{r}) = \mathtt{lhs}~\cup~\mathtt{rhs}$ is a set of all $\mathtt{agent}$s used in rule $\mathtt{r} = \mathtt{lhs \Rightarrow rhs}$. Then, the set of all possible unique $\mathtt{agent}$s present in the model is $\mathcal{S} = \mathcal{S}_\mathtt{0} \cup \mathcal{S}_\mathcal{R}$.
\end{definition}

We show how to construct a set of MRS rewriting rules from a BCSL rule in~\autoref{set_of_rules}. The approach is straightforward since the grounding function $\Theta$ creates the set of all possible grounded rules (reactions). Then, we need to create a pair of multisets from both sides of each grounded rule. The obtained pair of multisets can be directly considered as multiset rewriting rule over support set $\mathcal{S}$, because all the possible agents are already present in the set $\mathcal{S}$ (follows from its construction). We call such rule \emph{MRS instantiation} of the BCSL rule (\autoref{mrs_instantiation}).

\begin{definition}{MRS instantiation}
\label{mrs_instantiation}

\noindent Let $\mathtt{r} = \mathtt{lhs \Rightarrow rhs}$ be a BCSL rule. We define \emph{MRS instantiation} $\mu(\mathtt{r})$ of rule $\mathtt{r}$ as a multiset rewriting rule $\mu(\mathtt{r}) = (\mathtt{L}, \mathtt{R})$ where $\mathtt{L \Rightarrow R}~ \in \Theta(\mathtt{lhs \Rightarrow rhs})$.
\end{definition}

\begin{definition}{Set of rules}
\label{set_of_rules}

\noindent Let $R$ be a set of BCSL rules. The corresponding set of MRS rules $\mathcal{X}$ is defined as a set of all possible MRS instantiations $\mathcal{X} = \{\mu(\mathtt{r}) ~|~ \mathtt{r} \in R \} $.
\end{definition}

We obtain the MRS $\mathcal{M} = (\mathcal{X}, \mathtt{M_0})$ over the set of elements $\mathcal{S}$ (\autoref{set_of_elements}) by taking constructed set of multiset rewriting rules $\mathcal{X}$ (MRS instantiations) as shown in~\autoref{set_of_rules} and the initial state $\mathtt{M_0} $. In~\autoref{running_mrs} there is an example of the MRS constructed from BCSL model (\autoref{running_model}) using this approach.

\begin{figure}[!h]

\centering

\begin{subfigure}[t]{\textwidth}

\centering

$\mathcal{S} = \Set{
  \begin{array}{l}
    \mathtt{P(S\{i\},T\{i\})::cell},~ \mathtt{P(S\{a\},T\{i\})::cell},~
    \mathtt{P(S\{i\},T\{a\})::cell},\\ \mathtt{P(S\{a\},T\{a\})::cell},~
    \mathtt{P(S\{i\},T\{i\})::out},~~~ \mathtt{P(S\{a\},T\{i\})::out},\\
    \mathtt{P(S\{i\},T\{a\})::out},~~~ \mathtt{P(S\{a\},T\{a\})::out}
  \end{array} 
}$

\caption{Set of all unique objects.}

\end{subfigure}%

\begin{subfigure}[t]{\textwidth}

\centering
 \[
 \mathcal{M} = 
    \left\{ 
      \begin{array}{l}
        \mathtt{M_0 = \{\mathtt{P(S\{i\},T\{i\})::cell}\}} \\
        \mathcal{X} = \left\{ 
          \begin{array}{l}
          \mu_\mathtt{r1\_S}: \{ \mathtt{P(S\{i\},T\{i\})::cell} \} \to \{ \mathtt{P(S\{a\},T\{i\})::cell} \}, \\
          \mu_\mathtt{r1\_S}: \{ \mathtt{P(S\{i\},T\{a\})::cell} \} \to \{ \mathtt{P(S\{a\},T\{a\})::cell} \}, \\
          \mu_\mathtt{r1\_T}: \{ \mathtt{P(S\{i\},T\{i\})::cell} \} \to \{ \mathtt{P(S\{i\},T\{a\})::cell} \}, \\
          \mu_\mathtt{r1\_T}: \{ \mathtt{P(S\{a\},T\{i\})::cell} \} \to \{ \mathtt{P(S\{a\},T\{a\})::cell} \}, \\
          \mu_\mathtt{r2}:~~~ \{ \mathtt{P(S\{i\},T\{i\})::cell} \} \to \{ \mathtt{P(S\{i\},T\{i\})::out} \}, \\
          \mu_\mathtt{r2}:~~~ \{ \mathtt{P(S\{a\},T\{i\})::cell} \} \to \{ \mathtt{P(S\{a\},T\{i\})::out} \}, \\
          \mu_\mathtt{r2}:~~~ \{ \mathtt{P(S\{i\},T\{a\})::cell} \} \to \{ \mathtt{P(S\{i\},T\{a\})::out} \}, \\
          \mu_\mathtt{r2}:~~~ \{ \mathtt{P(S\{a\},T\{a\})::cell} \} \to \{ \mathtt{P(S\{a\},T\{a\})::out} \}, \\
          \end{array} 
          \right\}
        \end{array}
      \right\}
\]

\caption{Instantiated rules.}

\end{subfigure}%

\caption{\textbf{MRS representation of a BCSL model from~\autoref{running_model}}. All the objects in set $\mathcal{S}$ are unique strings representing possible forms of original BCSL agents. These are used in multiset rewriting rules and the initial multiset. For convenience, to allow identification of source rule, we label each constructed multiset rewrite rule by $\mu_{\mathtt{r}}$ where \texttt{r} is the label of source BCSL rule.}\label{running_mrs}

\end{figure}

\subsection{BCSL vs. MRS relationship}

In~\autoref{mrs_construction}, we provided an approach to constructing an MRS from any BCSL model. In this section, we show that the behaviour of such a constructed MRS is equivalent to the behaviour of the original BCSL model. This is shown in~\autoref{mrs_is_bcsl} by considering that the type of states in both systems is the same (\autoref{states_equality}), both BCSL rule and its MRS instantiation can always be applied to the same state (\autoref{both_enabled}), and they can always be rewritten to the same states (\autoref{both_apply}).

The semantics of MRS are given in terms of a set of infinite runs, while the semantics of BCSL are given in terms of LTS. First, we need to relate these two constructs. We define how a set of runs corresponds to an LTS. To ensure that the $\mathtt{LTS}$ represents only infinite runs, we extend it to $\mathtt{LTS}_\varepsilon$ such that we add self-loops on states with no successors labelled by an empty rule $\varepsilon$.

\begin{definition}{Run in LTS}
\label{lts_vs_runs}

\noindent Let $\mathtt{LTS}_\varepsilon = (S, T, L)$ be a labelled transition system. $\mathtt{LTS}_\varepsilon$ \emph{generates a set of infinite runs} $\mathfrak{L}(\mathtt{LTS}_\varepsilon)$ such that the infinite run $\pi = s_0, s_1, \ldots, s_n, \ldots$ belongs to $\mathfrak{L}(\mathtt{LTS}_\varepsilon)$ if (i) $s_0 \in S$ and (ii) for all $i \geq 1: (s_{i-1}, s_i) \in T$. Moreover, such a run $\pi$ has a run label $\overrightarrow{\pi} = l_1, l_2, \ldots, l_n, \ldots$ such that for all $i \geq 1: L((s_{i-1}, s_i)) = l_i$.
\end{definition}

\begin{remark}\label{states_equality}
Multisets in constructed MRS use as elements grounded BCSL \texttt{agent}s, and therefore the type of MRS multisets and BCSL multisets is the same and they can be freely interchanged and checked for equality.
\end{remark}

From the construction of $\mathcal{X}$ (\autoref{set_of_rules}) follows that for any rule $\mathtt{lhs \Rightarrow rhs} \in R$, the function $\Theta$ creates grounded rules, which represent all possible instantiations. Then, for any instantiation $\mathtt{L \Rightarrow R}$, \texttt{L} and \texttt{R} are used to form a multiset rewriting rule, obtaining an MRS instantiation (\autoref{mrs_instantiation}).

\begin{lemma}\label{both_enabled}
Let $\mathtt{M}$ be a grounded multiset, $\mathtt{r} = \mathtt{lhs \Rightarrow rhs}$ a BCSL rule, and $\mu = (^{\bullet}\mu, \mu^{\bullet})$ its MRS instantiation. Then, $\mathtt{r}$ can be applied to $\mathtt{M}$ \emph{iff} $\mu$ can be applied to $\mathtt{M}$.
\end{lemma}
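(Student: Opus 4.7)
The plan is to unfold both notions of applicability to the same primitive condition and observe that they match witness-for-witness. On the BCSL side, by \autoref{bcsl_rewriting} the rule $\mathtt{r}$ is applicable to $\mathtt{M}$ exactly when there are consistent instantiations $\mathcal{I}\langle\mathtt{lhs}\rangle = \mathsf{M}_l$ and $\mathcal{I}\langle\mathtt{rhs}\rangle = \mathsf{M}_r$ together with some $\mathsf{M}''$ satisfying $\mathtt{M} \equiv \mathsf{M}'' + \mathsf{M}_l$. On the MRS side, $\mu = ({}^{\bullet}\mu, \mu^{\bullet})$ is enabled iff ${}^{\bullet}\mu \subseteq \mathtt{M}$. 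The key bridge is \autoref{mrs_instantiation}: every MRS instantiation of $\mathtt{r}$ is literally a pair $(\mathcal{I}\langle\mathtt{lhs}\rangle, \mathcal{I}\langle\mathtt{rhs}\rangle) \in \Theta(\mathtt{r})$ of consistent pattern instantiations regarded as multisets (\autoref{pattern_vs_multiset}).

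For the ($\Rightarrow$) direction I would take the instantiations provided by the BCSL inference rule and define $\mu := (\mathsf{M}_l, \mathsf{M}_r)$; this is an MRS instantiation of $\mathtt{r}$ by \autoref{mrs_instantiation}, and the decomposition $\mathtt{M} \equiv \mathsf{M}'' + \mathsf{M}_l$ is exactly the statement ${}^{\bullet}\mu \subseteq \mathtt{M}$, so $\mu$ is enabled. For the ($\Leftarrow$) direction I would reverse this: an enabled $\mu$ comes, by \autoref{mrs_instantiation}, with consistent pattern instantiations whose multiset forms are ${}^{\bullet}\mu$ and $\mu^{\bullet}$, and from ${}^{\bullet}\mu \subseteq \mathtt{M}$ we extract some $\mathsf{M}''$ with $\mathtt{M} \equiv \mathsf{M}'' + {}^{\bullet}\mu$, producing the three premises of the inference rule in \autoref{bcsl_rewriting} and hence applicability of $\mathtt{r}$.

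The only non-mechanical step is justifying that the two flavours of $\subseteq$ involved coincide on grounded multisets: the BCSL version is defined existentially via the structural congruence on terms (see the discussion after \autoref{terms_congru}), while the MRS version is defined pointwise on multiplicities. Since $\mathtt{M}$ is grounded and the elements of $\mathcal{S}$ are grounded agents, the congruence $\equiv$ collapses to the commutative-monoid axioms on $+$, which is exactly what is needed to pass between the additive decomposition $\mathtt{M} \equiv \mathsf{M}'' + \mathsf{M}_l$ and the pointwise ${}^{\bullet}\mu \subseteq \mathtt{M}$; this is the implicit content of \autoref{states_equality}. Once that identification is cited at both turning points, the proof reduces to a direct unfolding of definitions, and I anticipate no further obstacle.
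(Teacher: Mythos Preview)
Your proposal is correct and follows essentially the same line as the paper's own proof: both directions reduce applicability of $\mathtt{r}$ to the existence of an instantiation $\mathtt{L}\in\Theta(\mathtt{lhs})$ with $\mathtt{L}\subseteq\mathtt{M}$, and then identify $\mathtt{L}$ with ${}^{\bullet}\mu$ via \autoref{mrs_instantiation}. Your treatment is in fact a little tighter than the paper's (which additionally spells out the contrapositive subcases), and your explicit remark that the congruence-based and multiplicity-based notions of $\subseteq$ coincide on grounded multisets---which you justify via \autoref{states_equality}---makes precise a step the paper leaves implicit.
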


\begin{proof}
From construction of $\mathcal{M}$ we know that to BCSL rules correspond their MRS instantiations. 

\begin{itemize}
  \item[$\Rightarrow:$] if $\mathtt{r}$

  \begin{itemize}
    \item[(a)] \emph{can} be applied to $\mathtt{M}$, then there exists $\mathtt{L} \in \Theta(\mathtt{lhs})$ such that $\mathtt{L} \subseteq \mathtt{M}$ (follows from~\autoref{bcsl_rewriting}). That means there has to exist an MRS instantiation $\mu = (^{\bullet}\mu, \mu^{\bullet})$ in $\mathcal{X}$ of rule $\mathtt{r}$ such that it can be applied to $\mathtt{M}$ because $^{\bullet}\mu \equiv \mathtt{L}$ and therefore $^{\bullet}\mu \subseteq \mathtt{M}$ and $\mu$ is enabled.

    \item[(b)] \emph{can not} be applied to $\mathtt{M}$, then for all $\mathtt{L} \in \Theta(\mathtt{lhs})$ holds that $\mathtt{L} \not\subseteq \mathtt{M}$ (follows from~\autoref{bcsl_rewriting}). That means that any MRS instantiation $\mu = (^{\bullet}\mu, \mu^{\bullet})$ in $\mathcal{X}$ of rule $\mathtt{r}$ can not be applied to $\mathtt{M}$ because $^{\bullet}\mu \equiv \mathtt{L}$ and therefore $^{\bullet}\mu \not\subseteq \mathtt{M}$ and $\mu$ is not enabled.
  \end{itemize}

  \item[$\Leftarrow:$] Symmetrically, if $\mu = (^{\bullet}\mu, \mu^{\bullet})$

  \begin{itemize}
    \item[(a)] \emph{can} be applied to $\mathtt{M}$, then $\mu$ is enabled and therefore $^{\bullet}\mu \subseteq \mathtt{M}$. That means there has to exist a rule $\mathtt{r}$ such that $\mu$ is its MRS instantiation with $\mathtt{L} \in \Theta(\mathtt{lhs})$ where $^{\bullet}\mu \equiv \mathtt{L}$. Therefore, also $\mathtt{L} \subseteq \mathtt{M}$ and $\mathtt{r}$ can be applied to $\mathtt{M}$.

    \item[(b)] \emph{can not} be applied to $\mathtt{M}$, then $\mu$ is not enabled and therefore $^{\bullet}\mu \not\subseteq \mathtt{M}$. That means that any rule $\mathtt{r}$ such that $\mu$ is its MRS instantiation with $\mathtt{L} \in \Theta(\mathtt{lhs})$ where $^{\bullet}\mu \equiv \mathtt{L}$, holds that $\mathtt{L} \not\subseteq \mathtt{M}$ and $\mathtt{r}$ can not be applied to $\mathtt{M}$. \qed
  \end{itemize}
\end{itemize}

\end{proof}

\begin{lemma}\label{both_apply}
Let $\mathtt{M}$ be a grounded multiset, $\mathtt{r} = \mathtt{lhs \Rightarrow rhs}$ a BCSL rule, and $\mu = (^{\bullet}\mu, \mu^{\bullet})$ its MRS instantiation. Then, by applying $\mathtt{r}$ to $\mathtt{M}$, we get a set of possible multisets. Among them, there is a multiset $\mathtt{M'}$ which can be obtained by applying $\mu$ to $\mathtt{M}$.
\end{lemma}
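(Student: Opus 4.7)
The plan is to unfold both definitions of rule application, show that the MRS instantiation corresponds to one specific choice of consistent instantiation pair in the BCSL inference rule, and then verify that both operations produce the same successor multiset for that choice.

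First, I would unpack the MRS side. By Definition~\ref{mrs_instantiation}, the MRS instantiation $\mu = ({}^\bullet\mu, \mu^\bullet)$ is obtained from some reaction $\mathtt{L} \Rightarrow \mathtt{R} \in \Theta(\mathtt{lhs} \Rightarrow \mathtt{rhs})$, which by Definition~\ref{grounding_function} means $\mathtt{L} = \mathcal{I}\langle\mathtt{lhs}\rangle$ and $\mathtt{R} = \mathcal{I}\langle\mathtt{rhs}\rangle$ for a consistent pair of instantiations, with ${}^\bullet\mu \equiv \mathtt{L}$ and $\mu^\bullet \equiv \mathtt{R}$. Applying $\mu$ to $\mathtt{M}$ is only meaningful when $\mu$ is enabled, which by Lemma~\ref{both_enabled} coincides with $\mathtt{r}$ being applicable, so I may assume $\mathtt{L} \subseteq \mathtt{M}$ and write $\mathtt{M} = \mathsf{M}'' + \mathtt{L}$ for some $\mathsf{M}''$. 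By the rule application definition the result is $\mathtt{M}' = (\mathtt{M} \smallsetminus {}^\bullet\mu) \cup \mu^\bullet = \mathsf{M}'' + \mathtt{R}$.

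Next, I would match this against the BCSL inference rule from Definition~\ref{bcsl_rewriting}. Instantiating the premises with exactly the same pair $\mathcal{I}\langle\mathsf{P}_l\rangle = \mathtt{L}$ and $\mathcal{I}\langle\mathsf{P}_r\rangle = \mathtt{R}$, all three premises are satisfied: both grounded instantiations exist and they are consistent by the choice made in $\Theta(\mathtt{r})$. Since $\mathtt{M} \equiv \mathsf{M}'' + \mathtt{L}$, the inference rule yields $\mathsf{M}'' + \mathtt{L} \xrightarrow{\mathtt{r}} \mathsf{M}'' + \mathtt{R}$, i.e.\ exactly the multiset $\mathtt{M}'$ obtained above. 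Hence $\mathtt{M}'$ is among the set of possible BCSL successors of $\mathtt{M}$ under $\mathtt{r}$, which is what the lemma asserts.

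The only delicate step is the bookkeeping between the pattern-level objects $\mathcal{I}\langle\mathsf{P}_l\rangle, \mathcal{I}\langle\mathsf{P}_r\rangle$ and the multiset-level objects $\mathtt{L}, \mathtt{R}, {}^\bullet\mu, \mu^\bullet$. By Remark~\ref{pattern_vs_multiset} the structural congruence $\equiv$ lets us interchange a pattern with its corresponding multiset, so $\mathtt{L}$ serves simultaneously as the grounded instantiation used in the BCSL premise and as the left component of the MRS rule; and analogously for $\mathtt{R}$. Once this identification is made explicit, the rest is a direct computation using the definitions of $\smallsetminus$ and $\cup$ together with the multiset decomposition $\mathtt{M} = \mathsf{M}'' + \mathtt{L}$, which is routine.
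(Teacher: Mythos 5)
Your proposal is correct and follows essentially the same route as the paper's own (much terser) proof: both identify the MRS instantiation $\mu$ with one consistent pair of instantiations $\mathcal{I}\langle\mathtt{lhs}\rangle, \mathcal{I}\langle\mathtt{rhs}\rangle$ from Definition~\ref{bcsl_rewriting}/\ref{grounding_function}, and observe that subtracting $\mathtt{L}$ and adding $\mathtt{R}$ coincides in the two formalisms via the identification of states (Remark~\ref{states_equality}/\ref{pattern_vs_multiset}). Your version merely spells out the decomposition $\mathtt{M} = \mathsf{M}'' + \mathtt{L}$ and the resulting computation explicitly, which the paper leaves implicit.
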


\begin{proof}
Follows from the definition of BCSL rewriting (\autoref{bcsl_rewriting}) where instantiations of both $\mathtt{lhs}$ and $\mathtt{rhs}$ of the rule are created, which corresponds to the MRS instantiation $\mu$ (\autoref{mrs_instantiation}). Then, instantiated agents from $\mathtt{lhs}$ are subtracted from, and $\mathtt{rhs}$ agents are added to the current state, which is in parallel with the MRS approach. Finally, from~\autoref{states_equality} we know that states in BCSL directly correspond to multisets in MRS, which forms the same basis for both formalisms. \qed
\end{proof}

Having such constructed MRS $\mathcal{M}$, we need to show that its behaviour (set of runs) corresponds to the behaviour (transition system) of the BCSL model.

\begin{theorem}{~}
\label{mrs_is_bcsl}

For any BCSL model $\mathcal{B} = (\mathcal{R}, \Sigma_a, \Sigma_s, \mathtt{M_0})$ there exists an MRS $\mathcal{M} = (\mathcal{X}, \mathtt{M_0})$ with $\mathfrak{L}(\mathtt{LTS}_\varepsilon(\mathcal{B})) = \mathfrak{L}(\mathcal{M})$.
\end{theorem}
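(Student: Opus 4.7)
The plan is to prove the two set inclusions $\mathfrak{L}(\mathtt{LTS}_\varepsilon(\mathcal{B})) \subseteq \mathfrak{L}(\mathcal{M})$ and $\mathfrak{L}(\mathcal{M}) \subseteq \mathfrak{L}(\mathtt{LTS}_\varepsilon(\mathcal{B}))$ by taking $\mathcal{M}$ to be the MRS obtained from $\mathcal{B}$ by the construction described in~\autoref{mrs_construction}. Since both sides consist of infinite sequences of multisets, and \autoref{states_equality} guarantees that the state domains coincide, the inclusion arguments reduce to showing that individual transitions survive the translation in both directions.

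For the first inclusion, I would take an arbitrary infinite run $\pi = s_0 s_1 s_2 \ldots \in \mathfrak{L}(\mathtt{LTS}_\varepsilon(\mathcal{B}))$ with run label $\overrightarrow{\pi} = l_1 l_2 \ldots$ and argue step-wise that it also belongs to $\mathfrak{L}(\mathcal{M})$. First, $s_0 = \mathtt{M_0}$ by~\autoref{lts_vs_runs} combined with the fact that the MRS construction uses the same initial state. For each $i \geq 1$, either $l_i$ is a proper BCSL rule $\mathtt{r}_i$ applied to $s_{i-1}$ producing $s_i$, or $l_i = \varepsilon$ marking a self-loop at a deadlock. In the former case, \autoref{both_enabled}~($\Rightarrow$) provides an MRS instantiation $\mu$ of $\mathtt{r}_i$ enabled at $s_{i-1}$, and \autoref{both_apply} guarantees that $\mu$ can be chosen so that applying it yields exactly $s_i$. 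In the latter case, no BCSL rule is enabled at $s_{i-1}$, so by \autoref{both_enabled} no MRS instantiation in $\mathcal{X}$ is enabled either, and the implicit empty rule $\varepsilon$ of the MRS (discussed right after the \emph{Semantics} definition) applies, producing the self-loop $s_{i-1} \to s_{i-1} = s_i$. Hence $\pi \in \mathfrak{L}(\mathcal{M})$.

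The reverse inclusion is symmetric. Given a run $\pi \in \mathfrak{L}(\mathcal{M})$, the initial state agrees with $\mathtt{M_0}$ by definition of MRS semantics. For each step $\mathtt{M}_{i-1} \to_{\mu} \mathtt{M}_i$, if $\mu \in \mathcal{X}$ is a proper MRS instantiation of some BCSL rule $\mathtt{r}$, then \autoref{both_enabled}~($\Leftarrow$) shows $\mathtt{r}$ is applicable at $\mathtt{M}_{i-1}$, and \autoref{both_apply} — read in the direction that the MRS successor is among the BCSL successors — gives a transition in $\mathtt{LTS}_\varepsilon(\mathcal{B})$ from $\mathtt{M}_{i-1}$ to $\mathtt{M}_i$ labelled by (the identifier of) $\mathtt{r}$. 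If $\mu = \varepsilon$, then no proper MRS rule is enabled at $\mathtt{M}_{i-1}$, hence by \autoref{both_enabled} no BCSL rule is applicable, so $\mathtt{M}_{i-1}$ has no successor in $\mathtt{LTS}(\mathcal{B})$; the $\varepsilon$-extension to $\mathtt{LTS}_\varepsilon$ then supplies the matching self-loop. Concatenating these matched transitions produces a run of $\mathtt{LTS}_\varepsilon(\mathcal{B})$ equal to $\pi$.

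The main subtlety — and what I expect to need most careful wording — is the treatment of the empty rule $\varepsilon$ on both sides, because it is introduced by two different mechanisms (implicit in MRS, explicit via the $\varepsilon$-completion of the LTS) yet must produce identical self-loops at identical deadlocks. The argument hinges on the biconditional of \autoref{both_enabled} applied uniformly over $\mathcal{R}$: a state is a BCSL deadlock iff it is an MRS deadlock, so the two empty-rule conventions coincide exactly, and no run is gained or lost by passing between the formalisms. Once this correspondence of deadlocks is explicit, the step-by-step matching above is routine and the theorem follows.
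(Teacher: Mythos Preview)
Your proposal is correct and follows essentially the same approach as the paper: construct $\mathcal{M}$ via \autoref{mrs_construction}, then use \autoref{both_enabled} and \autoref{both_apply} to match transitions step-by-step in both directions. Your treatment is in fact more thorough than the paper's terse proof, which does not explicitly address the $\varepsilon$/deadlock correspondence that you carefully spell out.
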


When we construct the MRS $\mathcal{M}$ using approach described in~\autoref{mrs_construction}, the proof of the theorem boils down to proving that for any grounded multiset $\mathtt{M}$ the following two implications hold:

\begin{itemize}
  \item[$\Rightarrow:$] for any BCSL rule $\mathtt{r} \in \mathcal{R}$ it holds that if $\mathcal{B}$ can apply \texttt{r} to \texttt{M} then there exists MRS rule $\mu \in \mathcal{X}$ such that $\mathcal{M}$ can apply $\mu$ to $\mathtt{M}$ 

  \item[$\Leftarrow:$] for any MRS rule $\mu \in \mathcal{X}$ it holds that if $\mathcal{M}$ can apply $\mu$ to \texttt{M} then there exists $\mathtt{r} \in \mathcal{R}$ such that $\mathcal{M}$ can apply \texttt{r} to $\mathtt{M}$ 
\end{itemize}

\noindent and in both cases we obtain the same multiset $\mathtt{M}'$.

\begin{proof}
Follows from construction of $\mathcal{M}$ (construction of corresponding MRS instantiations of rules), \autoref{both_enabled} (either both rules are enabled or neither of them is) and~\autoref{both_apply} (both rules create identical results). \qed
\end{proof}

\section{Regulations}

BCSL models manifest strong nondeterminism, which is natural but often not desired to some extent. Additional knowledge about the described biological system can further reduce possible model behaviour scenarios. These are usually introduced by defining quantitative properties~\cite{trojak2020parameter}. However, these properties are not always easy to define, and alternative mechanisms are needed.

In the following, we provide an introduction to \emph{regulation} approaches applied to BCSL. These were introduced in~\cite{trojak2021regulated} for MRS. To formally establish them in the context of BCSL, we assume the corresponding MRS is constructed first (\autoref{mrs_is_bcsl}) and the regulation is applied to it. This can be done because the constructed MRS shares rule labels with the original BCSL model and states and their content are of the same type (\autoref{states_equality}).

\subsection*{Regular rewriting}

In regular rewriting, there is given a $\omega$-regular language $\zeta$ over rules. This explicitly defines sequences of rules that can be used. Only runs with the rule sequence from this language are allowed. Typically, we define the language $\zeta$ using a regular expression.

For example, we define a regular expression $(\mathtt{r1\_S} . \mathtt{r1\_T} . \mathtt{r2} ~|~ \mathtt{r1\_T} . \mathtt{r1\_S})$ as regulation for model from~\autoref{running_model}. This RE makes sure that first both activation rules are used and then the molecule is exported out of the cell, depending on the order of activation. The effect of regulation on set of runs is depicted in~\autoref{running_lts_regular}.

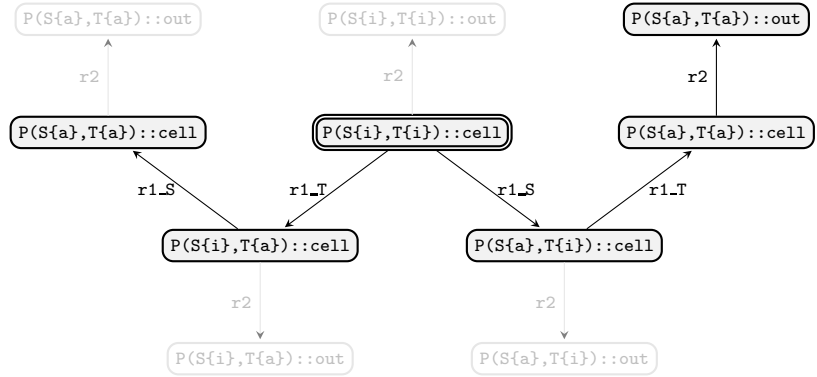
\begin{figure}

\begin{center}
\begin{tikzpicture}[shorten >=1pt, font=\scriptsize]
  \node[state,accepting] at (0,0) (q_0) {\lstinline[style=EBNF]!P(S{i},T{i})::cell!};
  \node[hidden] at (0,1.5) (q_4) {\lstinline[style=EBNF]!P(S{i},T{i})::out!};
  \node[state] at (-2,-1.5) (q_1) {\lstinline[style=EBNF]!P(S{i},T{a})::cell!};
  \node[state] at (2,-1.5) (q_2) {\lstinline[style=EBNF]!P(S{a},T{i})::cell!};
  \node[state] at (-4,0) (q_3_1) {\lstinline[style=EBNF]!P(S{a},T{a})::cell!};
  \node[state] at (4,0) (q_3_2) {\lstinline[style=EBNF]!P(S{a},T{a})::cell!};
  \node[hidden] at (-2,-3) (q_5) {\lstinline[style=EBNF]!P(S{i},T{a})::out!};
  \node[hidden] at (2,-3) (q_6) {\lstinline[style=EBNF]!P(S{a},T{i})::out!};
  \node[hidden] at (-4,1.5) (q_7_1) {\lstinline[style=EBNF]!P(S{a},T{a})::out!};
  \node[state] at (4,1.5) (q_7_2) {\lstinline[style=EBNF]!P(S{a},T{a})::out!};

  \path[->] 
            (q_0) edge node [left] {$\mathtt{r1\_T}$} (q_1)
            (q_0) edge node [right] {$\mathtt{r1\_S}$} (q_2)
            (q_1) edge node [left] {$\mathtt{r1\_S}$} (q_3_1)
            (q_2) edge node [right] {$\mathtt{r1\_T}$} (q_3_2)
            (q_3_2) edge node [left] {$\mathtt{r2}$} (q_7_2);
  \path[every edge/.style={gray,draw=gray!20,text=gray!50}]
            (q_0) edge node [left] {$\mathtt{r2}$} (q_4)
            (q_1) edge node [left] {$\mathtt{r2}$} (q_5)
            (q_2) edge node [left] {$\mathtt{r2}$} (q_6)
            (q_3_1) edge node [left] {$\mathtt{r2}$} (q_7_1);
\end{tikzpicture}
\end{center}

\caption{\textbf{The set of runs of the model with regular regulation.} The runs correspond to the model from~\autoref{running_model} with applied regular regulation. The runs are represented as possible paths in the tree-like graph, starting in the double circled state (the initial state). The grey states and transition are absent due to effects of the regulation.}\label{running_lts_regular}
\end{figure}

\subsection*{Ordered rewriting}

Ordered regulation defines a partial order on rules. Then it is not allowed to apply a rule immediately after the rule which is higher in the order. Runs that violate this property are not allowed.

For example, we define a partial order $\mathtt{r1\_S} < \mathtt{r2}, \mathtt{r1\_T} < \mathtt{r2}$ as regulation for model from Figure~\ref{running_model}. This order makes sure that rule $\mathtt{r2}$ is never used after rule $\mathtt{r1\_S}$ neither rule $\mathtt{r1\_T}$. The effect of regulation on set of runs is depicted in~\autoref{running_lts_ordered}.

\begin{figure}

\begin{center}
\begin{tikzpicture}[shorten >=1pt, font=\scriptsize]
  \node[state,accepting] at (0,0) (q_0) {\lstinline[style=EBNF]!P(S{i},T{i})::cell!};
  \node[state] at (0,1.5) (q_4) {\lstinline[style=EBNF]!P(S{i},T{i})::out!};
  \node[state] at (-2,-1.5) (q_1) {\lstinline[style=EBNF]!P(S{i},T{a})::cell!};
  \node[state] at (2,-1.5) (q_2) {\lstinline[style=EBNF]!P(S{a},T{i})::cell!};
  \node[state] at (-4,0) (q_3_1) {\lstinline[style=EBNF]!P(S{a},T{a})::cell!};
  \node[state] at (4,0) (q_3_2) {\lstinline[style=EBNF]!P(S{a},T{a})::cell!};
  \node[hidden] at (-2,-3) (q_5) {\lstinline[style=EBNF]!P(S{i},T{a})::out!};
  \node[hidden] at (2,-3) (q_6) {\lstinline[style=EBNF]!P(S{a},T{i})::out!};
  \node[hidden] at (-4,1.5) (q_7_1) {\lstinline[style=EBNF]!P(S{a},T{a})::out!};
  \node[hidden] at (4,1.5) (q_7_2) {\lstinline[style=EBNF]!P(S{a},T{a})::out!};

  \path[->] (q_0) edge node [left] {$\mathtt{r2}$} (q_4)
            (q_0) edge node [left] {$\mathtt{r1\_T}$} (q_1)
            (q_0) edge node [right] {$\mathtt{r1\_S}$} (q_2)
            (q_1) edge node [left] {$\mathtt{r1\_S}$} (q_3_1)
            (q_2) edge node [right] {$\mathtt{r1\_T}$} (q_3_2);
  \path[every edge/.style={gray,draw=gray!20,text=gray!50}]
            (q_3_2) edge node [left] {$\mathtt{r2}$} (q_7_2)
            (q_1) edge node [left] {$\mathtt{r2}$} (q_5)
            (q_2) edge node [left] {$\mathtt{r2}$} (q_6)
            (q_3_1) edge node [left] {$\mathtt{r2}$} (q_7_1);
\end{tikzpicture}
\end{center}

\caption{\textbf{The set of runs of the model with ordered regulation.} The runs correspond to the model from~\autoref{running_model} with applied ordered regulation. The runs are represented as possible paths in the tree-like graph, starting in the double circled state (the initial state). The grey states and transition are absent due to effects of the regulation.}\label{running_lts_ordered}
\end{figure}
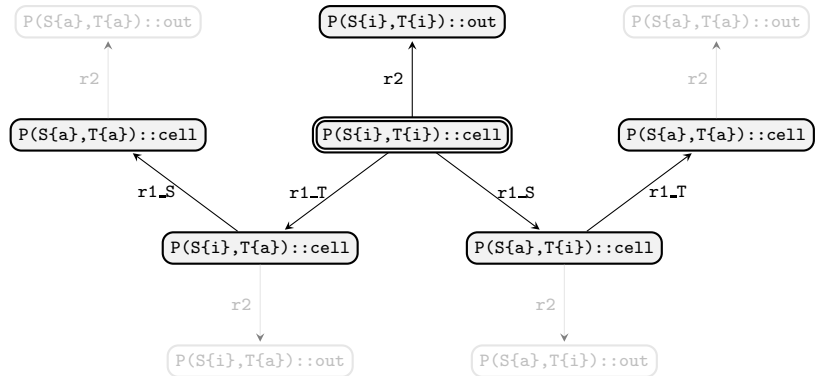

\subsection*{Programmed rewriting}

Programmed regulation defines a set of successor rules to every rule. When a particular rule is used, only its successors are allowed to be used next. Similarly to the previous regulation, Runs which violate this property are not allowed.

For example, we use successor function defined as $\zeta(\mathtt{r1\_S}) = \{\mathtt{r2}, \mathtt{r1\_T}\}$, $\zeta(\mathtt{r1\_T}) = \{ \mathtt{r1\_S} \}$, and $\zeta(\mathtt{r2}) = \emptyset$ as regulation for model from Figure~\ref{running_model}. This function makes sure that rule $\mathtt{r2}$ is used only after rule $\mathtt{r1\_S}$, never after rule $\mathtt{r1\_T}$. The effect of regulation on set of runs is depicted in~\autoref{running_lts_programmed}.

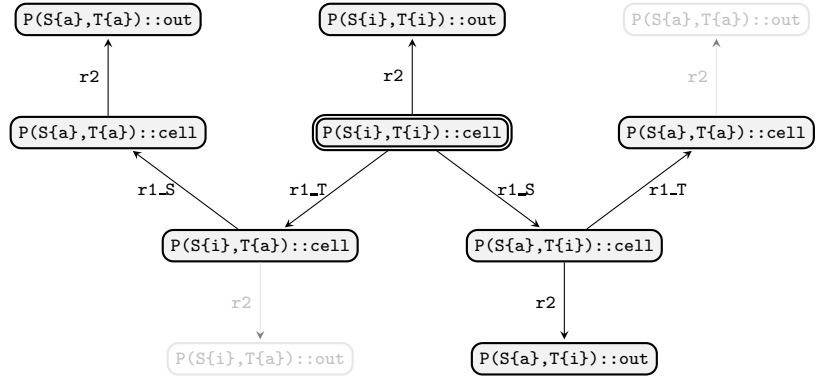
\begin{figure}

\begin{center}
\begin{tikzpicture}[shorten >=1pt, font=\scriptsize]
  \node[state,accepting] at (0,0) (q_0) {\lstinline[style=EBNF]!P(S{i},T{i})::cell!};
  \node[state] at (0,1.5) (q_4) {\lstinline[style=EBNF]!P(S{i},T{i})::out!};
  \node[state] at (-2,-1.5) (q_1) {\lstinline[style=EBNF]!P(S{i},T{a})::cell!};
  \node[state] at (2,-1.5) (q_2) {\lstinline[style=EBNF]!P(S{a},T{i})::cell!};
  \node[state] at (-4,0) (q_3_1) {\lstinline[style=EBNF]!P(S{a},T{a})::cell!};
  \node[state] at (4,0) (q_3_2) {\lstinline[style=EBNF]!P(S{a},T{a})::cell!};
  \node[hidden] at (-2,-3) (q_5) {\lstinline[style=EBNF]!P(S{i},T{a})::out!};
  \node[state] at (2,-3) (q_6) {\lstinline[style=EBNF]!P(S{a},T{i})::out!};
  \node[state] at (-4,1.5) (q_7_1) {\lstinline[style=EBNF]!P(S{a},T{a})::out!};
  \node[hidden] at (4,1.5) (q_7_2) {\lstinline[style=EBNF]!P(S{a},T{a})::out!};

  \path[->] (q_0) edge node [left] {$\mathtt{r2}$} (q_4)
            (q_0) edge node [left] {$\mathtt{r1\_T}$} (q_1)
            (q_0) edge node [right] {$\mathtt{r1\_S}$} (q_2)
            (q_1) edge node [left] {$\mathtt{r1\_S}$} (q_3_1)
            (q_2) edge node [right] {$\mathtt{r1\_T}$} (q_3_2)
            (q_3_1) edge node [left] {$\mathtt{r2}$} (q_7_1)
            (q_2) edge node [left] {$\mathtt{r2}$} (q_6);
  \path[every edge/.style={gray,draw=gray!20,text=gray!50}]
            (q_1) edge node [left] {$\mathtt{r2}$} (q_5)
            (q_3_2) edge node [left] {$\mathtt{r2}$} (q_7_2);
\end{tikzpicture}
\end{center}

\caption{\textbf{The set of runs of the model with programmed regulation.} The runs correspond to the model from~\autoref{running_model} with applied programmed regulation. The runs are represented as possible paths in the tree-like graph, starting in the double circled state (the initial state). The grey states and transition are absent due to effects of the regulation.}\label{running_lts_programmed}
\end{figure}

\subsection*{Conditional rewriting}

Conditional rewriting defines a \emph{prohibited} context to each rule, that is, a multiset of \emph{grounded} agents which cannot be present in the current state. Conditional regulation is based on local information and does not need any history of applied rules. Runs that violate this property are not allowed.

Please note that without loss of generality, the prohibited context can contain a set of prohibited multisets, and for then each of them it has to hold that it is not a subset of the current state.

For example, we define prohibited context $\zeta(\mathtt{r2}) = \{$ \lstinline[style=BCSL]!P(S{a},T{i})::cell! $\}$ as regulation for model from Figure~\ref{running_model}. This $\zeta$ makes sure that rule $\mathtt{r2}$ is never used when agent \lstinline[style=BCSL]!P(S{a},T{i})::cell! is present in the current state. The effect of regulation on set of runs is depicted in~\autoref{running_lts_conditional}.

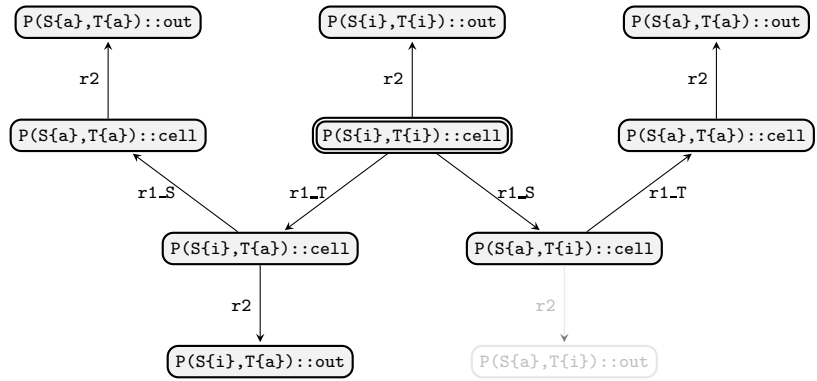
\begin{figure}

\begin{center}
\begin{tikzpicture}[shorten >=1pt, font=\scriptsize]
  \node[state,accepting] at (0,0) (q_0) {\lstinline[style=EBNF]!P(S{i},T{i})::cell!};
  \node[state] at (0,1.5) (q_4) {\lstinline[style=EBNF]!P(S{i},T{i})::out!};
  \node[state] at (-2,-1.5) (q_1) {\lstinline[style=EBNF]!P(S{i},T{a})::cell!};
  \node[state] at (2,-1.5) (q_2) {\lstinline[style=EBNF]!P(S{a},T{i})::cell!};
  \node[state] at (-4,0) (q_3_1) {\lstinline[style=EBNF]!P(S{a},T{a})::cell!};
  \node[state] at (4,0) (q_3_2) {\lstinline[style=EBNF]!P(S{a},T{a})::cell!};
  \node[state] at (-2,-3) (q_5) {\lstinline[style=EBNF]!P(S{i},T{a})::out!};
  \node[hidden] at (2,-3) (q_6) {\lstinline[style=EBNF]!P(S{a},T{i})::out!};
  \node[state] at (-4,1.5) (q_7_1) {\lstinline[style=EBNF]!P(S{a},T{a})::out!};
  \node[state] at (4,1.5) (q_7_2) {\lstinline[style=EBNF]!P(S{a},T{a})::out!};

  \path[->] (q_0) edge node [left] {$\mathtt{r2}$} (q_4)
            (q_0) edge node [left] {$\mathtt{r1\_T}$} (q_1)
            (q_0) edge node [right] {$\mathtt{r1\_S}$} (q_2)
            (q_1) edge node [left] {$\mathtt{r2}$} (q_5)
            (q_1) edge node [left] {$\mathtt{r1\_S}$} (q_3_1)
            (q_2) edge node [right] {$\mathtt{r1\_T}$} (q_3_2)
            (q_3_1) edge node [left] {$\mathtt{r2}$} (q_7_1)
            (q_3_2) edge node [left] {$\mathtt{r2}$} (q_7_2);
  \path[every edge/.style={gray,draw=gray!20,text=gray!50}]
            (q_2) edge node [left] {$\mathtt{r2}$} (q_6);
\end{tikzpicture}
\end{center}

\caption{\textbf{The set of runs of the model with conditional regulation.} The runs correspond to the model from~\autoref{running_model} with applied conditional regulation. The runs are represented as possible paths in the tree-like graph, starting in the double circled state (the initial state). The grey states and transition are absent due to effects of the regulation.}\label{running_lts_conditional}
\end{figure}

\subsection*{Concurrent-free rewriting}

Concurrent rules are those which consume common agents. Concurrent-free rewriting assigns a priority to one of the concurrent rules. Whenever multiple concurrent rules are applicable in a state, only the prioritised one can be used. Runs that violate this property are not allowed.

For example, we define prioritisation $\zeta = \{ (\mathtt{r1_S}, \mathtt{r2}), (\mathtt{r1_T}, \mathtt{r2}) \}$ as regulation for model from Figure~\ref{running_model}. This $\zeta$ makes sure that rules $\mathtt{r1_S}$ and $\mathtt{r1_T}$ have always priority over rule $\mathtt{r2}$. The effect of regulation on set of runs is depicted in~\autoref{running_lts_concurrent_free}.

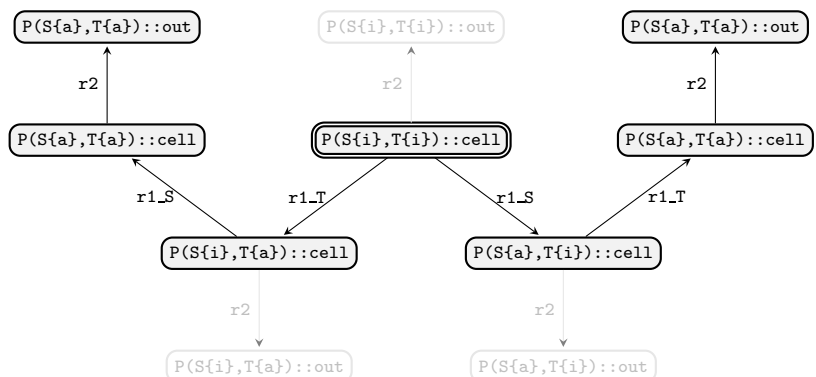
\begin{figure}

\begin{center}
\begin{tikzpicture}[shorten >=1pt, font=\scriptsize]
  \node[state,accepting] at (0,0) (q_0) {\lstinline[style=EBNF]!P(S{i},T{i})::cell!};
  \node[hidden] at (0,1.5) (q_4) {\lstinline[style=EBNF]!P(S{i},T{i})::out!};
  \node[state] at (-2,-1.5) (q_1) {\lstinline[style=EBNF]!P(S{i},T{a})::cell!};
  \node[state] at (2,-1.5) (q_2) {\lstinline[style=EBNF]!P(S{a},T{i})::cell!};
  \node[state] at (-4,0) (q_3_1) {\lstinline[style=EBNF]!P(S{a},T{a})::cell!};
  \node[state] at (4,0) (q_3_2) {\lstinline[style=EBNF]!P(S{a},T{a})::cell!};
  \node[hidden] at (-2,-3) (q_5) {\lstinline[style=EBNF]!P(S{i},T{a})::out!};
  \node[hidden] at (2,-3) (q_6) {\lstinline[style=EBNF]!P(S{a},T{i})::out!};
  \node[state] at (-4,1.5) (q_7_1) {\lstinline[style=EBNF]!P(S{a},T{a})::out!};
  \node[state] at (4,1.5) (q_7_2) {\lstinline[style=EBNF]!P(S{a},T{a})::out!};

  \path[->] 
            (q_0) edge node [left] {$\mathtt{r1\_T}$} (q_1)
            (q_0) edge node [right] {$\mathtt{r1\_S}$} (q_2)
            (q_1) edge node [left] {$\mathtt{r1\_S}$} (q_3_1)
            (q_2) edge node [right] {$\mathtt{r1\_T}$} (q_3_2)
            (q_3_2) edge node [left] {$\mathtt{r2}$} (q_7_2)
            (q_3_1) edge node [left] {$\mathtt{r2}$} (q_7_1);
  \path[every edge/.style={gray,draw=gray!20,text=gray!50}]
            (q_0) edge node [left] {$\mathtt{r2}$} (q_4)
            (q_1) edge node [left] {$\mathtt{r2}$} (q_5)
            (q_2) edge node [left] {$\mathtt{r2}$} (q_6);
\end{tikzpicture}
\end{center}

\caption{\textbf{The set of runs of the model with concurrent-free regulation.} The runs correspond to the model from~\autoref{running_model} with applied concurrent-free regulation. The runs are represented as possible paths in the tree-like graph, starting in the double circled state (the initial state). The grey states and transition are absent due to effects of the regulation.}\label{running_lts_concurrent_free}
\end{figure}

\section{Summary}

In this short paper, we first introduce multiset rewriting systems MRS~\cite{trojak2021regulated} and BioChemical Space language~\cite{TROJAK202091}. Then, in~\autoref{mrs_construction} we show how for any BCSL model, we can construct an MRS such that the corresponding set of runs are equal for both systems. Finally, we introduce \emph{regulations} in the context of BCSL, formally influencing the runs of respective MRS. This way, we can use regulated BCSL models while they hold properties shown in~\cite{trojak2021regulated}.

\bibliographystyle{plainurl}
\bibliography{papers}

\end{document}